\newcommand{\myparatight}[1]{\noindent{\bf {#1}:}~}
\newenvironment{packeditemize}{\begin{list}{$\bullet$}{\setlength{\itemsep}{1pt}\addtolength{\labelwidth}{20pt}\setlength{\leftmargin}{\labelwidth}\setlength{\listparindent}{\parindent}\setlength{\parsep}{0pt}\setlength{\topsep}{0pt}}}{\end{list}}
\newcommand{\Alan}{}
\newcommand{\alan}{}
\newcommand{\printfnsymbol}[1]{%
  \textsuperscript{\@fnsymbol{#1}}%
}
\begin{document}
%

\title{SybilBlind: Detecting Fake Users in Online Social Networks without Manual Labels}
\author{Binghui Wang \and
Le Zhang \and
Neil Zhenqiang Gong\thanks{The first two authors contributed equally to this work.}}
\institute{ECE Department, Iowa State University \\
\email{\{binghuiw, lezhang, neilgong\}@iastate.edu}
}



\maketitle              

\begin{abstract}
Detecting fake users (also called Sybils) in online social networks is a basic 
security research problem. State-of-the-art approaches rely on a large amount 
of manually labeled users as a training set. These approaches suffer from three key limitations: 1) it is time-consuming and costly to manually label a large training set, 2) 
they cannot detect new Sybils in a timely fashion, and 3) they are vulnerable to Sybil attacks that leverage information of the training set. 
 In this work, we propose \emph{SybilBlind}, a structure-based Sybil detection framework that does not rely on a manually labeled training set.  SybilBlind works under the same threat model as state-of-the-art structure-based methods.  
 We demonstrate the effectiveness of SybilBlind using 1) a social network with synthetic Sybils and 2) two Twitter datasets with 
real Sybils. 
\alan{For instance, SybilBlind achieves an AUC of 0.98 on a Twitter dataset.}

\keywords{Sybil Detection  \and Social Networks Security.}

\end{abstract}
\section{Introduction}


Online social networks (OSNs) are known to be vulnerable to \emph{Sybil attacks}, in which attackers maintain a large number of fake users (also called Sybils).
For instance, 10\% of Twitter users were fake~\cite{Twittersybil}. 
Attackers can leverage Sybils to perform various malicious activities such as manipulating presidential election~\cite{election},  influencing stock market~\cite{stock}, distributing spams and phishing URLs~\cite{Thomas11}, etc.. 
Therefore, Sybil detection in OSNs is an important research problem. 



Indeed, Sybil detection has attracted increasing attention from multiple research communities such as security, networking, and data mining. 
Among various approaches, structure-based ones~\cite{Yu06,Yu08,Danezis09,Viswanath10,Yang11-sybil,sybilrank,sybildefender,Yang12-spam,integro,sybilbelief,sybilfuse,sybilscar,sybilwalk,GANG} have demonstrated promising results.  For instance, SybilRank~\cite{sybilrank} and Integro~\cite{integro} were deployed to detect a large amount of Sybils in Tuenti, the largest OSN in Spain. 
\alan{SybilSCAR~\cite{sybilscar} was shown to be effective and efficient in detecting Sybils in Twitter.} 
State-of-the-art structure-based approaches adopt the following machine learning paradigm:  they first require an OSN provider to collect a large manually labeled training set consisting of  labeled benign users and/or  labeled Sybils; then they   learn a model to  distinguish between benign users and Sybils; finally, the model is used to detect Sybils.

Such paradigm of relying on a manually labeled training set suffers from three key limitations. First, it is time-consuming and costly to obtain a large manually labeled training set. 
We note that OSN providers could outsource 
manual labeling to crowdsourcing services like Amazon Mechanical Turk~\cite{Wang13}.
However,  crowdsourcing manual labeling requires disclosing user information to ``turkers", which raises privacy concerns. 
Moreover, attackers could act as ``turkers" to adversarially mislabel users.  
OSNs often allow users to flag other users as Sybils. However, similar to crowdsourcing, Sybils could adversarially mislabel benign users as Sybils.
Second, attackers can launch new Sybil attacks when the old ones were taken down.
 It takes time for human workers to manually label a training set for the new attacks.
 As a result,  some benign users might already be attacked before the new attacks were detected. 
 Third, using a manually labeled training set makes these approaches vulnerable to 
 Sybil attacks that leverage the information of the training set~\cite{TemporalDynamicsCCS15Liu}. The key intuition is that once an attacker knows or infers the training set, he can perform better attacks over time. 
 \alan{Our method  is secure against such attacks as it does not rely on labeled users. }

\myparatight{Our work} In this work, we propose SybilBlind, a structure-based framework,  to detect Sybils without relying on a manually labeled training set, under the same threat model as state-of-the-art structure-based methods (See Section~\ref{threatmodel}).
%
Our key idea is to sample some users from an OSN, randomly assign labels (i.e., \emph{benign} or \emph{Sybil}) to them, and treat them as if they were a training set without actually manually labeling them.  
Such randomly sampled training set could have various levels of label noise, 
where a user's randomly assigned label is noisy if it is different from the user's true label. 
Then, we take the noisy training set as an input to a state-of-the-art Sybil detection method (e.g., SybilSCAR~\cite{sybilscar} in our experiments) that is relatively robust to label noise (i.e., performance does not degrade much with a relatively low fraction of noisy labels) to detect Sybils. 
We define a \emph{sampling trial} as the process that we randomly sample a noisy training set 
and use a state-of-the-art Sybil detection method to detect Sybils via taking the sampled training set as
an input. Since state-of-the-art Sybil detection methods can only accurately detect Sybils in the sampling trials where the sampled training sets have relatively low label noise, we repeat for multiple sampling trials and we design an aggregator to aggregate the results in the multiple sampling trials.  

A key challenge of our SybilBlind framework is how to aggregate the results in multiple sampling trials. 
For instance, one natural aggregator is to average the results in multiple sampling trials. 
Specifically, in each sampling trial, we have a probability of being a Sybil for each user.  We average the probabilities over multiple sampling trials for each user and use the averaged probability to 
classify a user to be benign or Sybil. However, we demonstrate, both theoretically and empirically, that such average aggregator achieves an accuracy that is close to random guessing.   
To address the aggregation challenge, we design a novel aggregator. Specifically, we design two new metrics called \emph{homophily} and \emph{one-side entropy}. 
In a sampling trial where Sybils are accurately detected, both homophily and one-side entropy are large. With the two metrics, our aggregator identifies the sampling trials in which the sampled training sets have low label noise and Sybils are accurately detected. Then, we compute an \emph{aggregated probability} of being 
a Sybil for every user from these sampling trials and use the aggregated probabilities to detect Sybils.

We evaluate SybilBlind both theoretically and empirically. Theoretically, we analyze the required number of sampling trials. Empirically, we perform evaluations using 1) a social network with synthesized Sybils, 2) a small  Twitter dataset (8K users and 68K edges) with real Sybils, and 3) a large Twitter dataset (42M users and 1.2B edges) with real Sybils. Our results demonstrate that SybilBlind is accurate, e.g., on the small Twitter dataset, SybilBlind achieves an AUC of 0.98.  Moreover, we adapt a community detection method and state-of-the-art Sybil detection method SybilSCAR~\cite{sybilscar} to detect Sybils when a manually labeled training set is unavailable. Our empirical evaluations demonstrate that SybilBlind substantially outperforms these adapted methods.

Our key contributions are summarized as follows:
\begin{packeditemize}
\item We propose SybilBlind, a structure-based framework, to detect Sybils in OSNs without relying on a manually labeled training set. 

\item We design a novel aggregator based on homophily and one-side entropy to aggregate results in multiple sampling trials. 

\item We evaluate SybilBlind both theoretically and empirically, as well as compare it with Sybil detection methods that we adapt to detect Sybils when no manually labeled training sets are available.  
Our empirical results demonstrate the superiority of SybilBlind over the adapted methods. 
  
\end{packeditemize}

\section{Related Work}
\label{relatedwork}

\subsection{Structure-based Approaches} 
One category of Sybil detection approaches leverage the global structure of the social network~\cite{Yu06,Yu08,Danezis09,Viswanath10,Yang11-sybil,sybilrank,sybildefender,Yang12-spam,integro,sybilbelief,robustspammer,sybilfuse,sybilscar,sybilwalk,GANG,sybilscarJournal}.
These approaches require a manually labeled training dataset, from which they propagate label information among the social network via leveraging the social structure.

\myparatight{Using random walks or Loopy Belief Propagation (LBP)} Many structure-based approaches~\cite{Yu06,Yu08,Danezis09,sybilrank,Yang12-spam,integro,sybilwalk} leverage random walks to propagate label information. 
SybilGuard~\cite{Yu06}, SybilLimit~\cite{Yu08}, and SybilInfer~\cite{Danezis09} only require one labeled benign user. However, they achieve limited performance and are not scalable to large-scale OSNs.  SybilRank~\cite{sybilrank} and \'{I}ntegro~\cite{integro} are state-of-the-art random walk based approaches, and they were successfully applied to detect a large amount of Sybils in Tuenti, the largest OSN in Spain. 
However, they require a large number of manually labeled benign users; and \'{I}ntegro even further requires a large number of labeled victims and non-victims,  which were used to learn a binary victim-prediction classifier. A user is said to be a victim if the user is connected with at least a Sybil. 
SybilBelief~\cite{sybilbelief}, Fu et al.~\cite{robustspammer}, GANG~\cite{GANG}, and SybilFuse~\cite{sybilfuse} leverage probabilistic graphical model techniques. Specifically, they model a social network as a pairwise Markov Random Fields. Given a training dataset, they leverage LBP
to infer the label of each remaining user. 

Recently, Wang et al.~\cite{sybilscar,sybilscarJournal} proposed a local rule based framework to unify random walk and LBP based approaches. Under this framework, a structure-based Sybil detection method essentially iteratively applies a certain local rule to each user to propagate label information. Different Sybil detection methods use different local rules. Moreover, they also proposed a new local rule, based on which they designed SybilSCAR that achieves state-of-the-art performance both theoretically and empirically. For instance, SybilSCAR achieves the tightest asymptotic bound on the number of Sybils per attack edge that can be injected into a social network without being detected~\cite{sybilscarJournal}.  \alan{However, as we demonstrate in our experiments on Twitter, SybilSCAR requires a large training dataset in order to achieve an accurate Sybil detection performance.}




\myparatight{Using community detection algorithms} Viswanath et al.~\cite{Viswanath10} showed that  Sybil detection  
can be cast as a community detection problem. The authors found that detecting local community around a labeled benign user had equivalent results to approaches such as SybilLimit and SybilInfer. 
Cao et al.~\cite{sybilrank} showed that SybilRank significantly outperforms community detection approaches. 
Moreover, Alvisi et al.~\cite{alvisiSybil13} demonstrated a vulnerability of the local community detection algorithm 
adopted by Viswanath et al.~\cite{Viswanath10}  by carefully designing an attack.

\myparatight{Summary} State-of-the-art structure-based approaches (e.g., SybilRank, SybilBelief, and SybilSCAR) require a large manually labeled training dataset. These approaches suffer from three key limitations as we discussed in Introduction. 


\subsection{Other Approaches}
Approaches in this direction~\cite{Wang10,spam:acsac10,benevenuto2010detecting,Yang_RAID11_TwitterML,kontaxis2011detecting,Yang11-sybil,Thomas11,gao2012towards,Wang13Clickstream,Song11} leverage various user-generated contents (e.g., tweets), behaviors (e.g., the frequency of sending tweets), and local social structures (e.g., how a user's friends are connected).  Most studies in this direction~\cite{Wang10,spam:acsac10,benevenuto2010detecting,Thomas11,gao2012towards,Song11} treat Sybil detection as a supervised learning problem; they extract various features from user-generated contents, behaviors, and local social structures, and they learn machine learning classifiers using a training dataset; the learnt classifiers are then used to classify each remaining user to be benign or Sybil.  For instance, Yang et al.~\cite{Yang11-sybil} proposed   local social structure based features  such as  the frequency that a user sends friend requests to others, the fraction of outgoing friend requests that are accepted, and the clustering coefficient of a user. 
One limitation  of these approaches is that Sybils can manipulate users' profiles to evade detection. For instance, a Sybil can link to many Sybils to manipulate its local social structure as desired. 
However, although these approaches are easy to evade, we believe that they can be used as a first layer to filter some basic Sybils and increase attackers' costs of performing Sybil attacks. Moreover, these approaches are complementary to approaches that leverage global social structures, and they can be used together in practice. 
\Alan{For instance, we can treat the outputs of these approaches as users' prior probabilities. Then, we can leverage structure-based methods, e.g., SybilSCAR~\cite{sybilscar}, to detect Sybils by iteratively propagating the priors among a social network.}





\section{Problem Definition}

\subsection{Structure-based Sybil Detection without Manual Labels}

Suppose we are given an undirected social network $G=(V,E)$,\footnote{Our framework can also be generalized to directed social networks.} 
where a node in $V$ corresponds to a user in an OSN 
and an edge $(u,v)$ represents a certain relationship between $u$ and $v$. 
For instance, on Facebook, an edge between $u$ and  $v$ could 
mean that $u$ is in $v$'s friend list and vice versa.
On Twitter, an edge $(u,v)$ could mean that $u$ and $v$ follow each other. 
We consider Sybil detection without a manually labeled training dataset, which we call \emph{blind Sybil detection}.  


\begin{definition}[Blind Sybil Detection] 
Suppose we are given a social network.  Blind  Sybil detection is to  classify each node to be benign or Sybil without a manually labeled training dataset.   
\end{definition}

\subsection{Threat Model}
\label{threatmodel}
We call the subnetwork containing all benign nodes and edges between them the \emph{benign region}, 
and we call the subnetwork containing all Sybil nodes and edges between them the \emph{Sybil region}. 
The edges between the two regions are called \emph{attack edges}. 
We consider the following threat model, which is widely adopted by existing structure-based methods.

\myparatight{Connected-Sybil attacks} We consider that Sybils 
are connected 
among themselves. 
In order to leverage Sybils to launch various malicious activities, an attacker often needs to first link his/her created Sybils to benign users. One attack strategy is that each Sybil aggressively sends friend requests to a large number of users (or follow a large number of users) that are randomly picked~\cite{Yang11-sybil}. In these attacks, although
some benign users (e.g., social capitalists~\cite{Ghosh12}) will accept such friend requests with a relatively high probability, making the Sybils embed to the benign region, most benign users will not accept these friend requests~\cite{Ghosh12}. As a result,  Sybils that are created using this attack strategy often have low ratios of accepted friend requests (or ratios of being followed back), as well as low clustering coefficients because most users that link to a Sybil might not be connected with each other. Therefore,  such Sybils can be detected by machine learning classifiers that use these structural features, as was shown by Yang et al.~\cite{Yang11-sybil} on RenRen, a large OSN in China.

In this paper, we consider that Sybils created by an attacker are connected (i.e., \emph{connected-Sybil attack}), so as 
to manipulate their structural features to evade the detection of structural feature based classifiers. 
Such connected-Sybil attacks were formally discussed by Alvisi et al.~\cite{alvisiSybil13}, are required by previous structure-based methods~\cite{Yu06,Yu08,Danezis09,Viswanath10,sybilrank,Yang12-spam,integro,sybilbelief,sybilscar,sybildefender}. 
Note that Sybils in Tuenti~\cite{sybilrank}, the largest OSN in Spain, are densely connected. 
Moreover, the datasets we used in our experiments also show that most of the Sybils are connected. For instance, in our large Twitter dataset, 85.3\% Sybils are connected to form a largest connected component with an average degree 24. 

\myparatight{Limited number of attack edges} Intuitively, most benign users would not establish \emph{trust} relationships with Sybils. We assume that the number of attack edges is relatively smaller, compared to the number of edges in the benign region and the Sybil region. This assumption is required by all previous structure-based methods~\cite{Yu06,Yu08,Danezis09,Viswanath10,sybilrank,Yang12-spam,integro,sybilbelief,sybilscar,sybildefender} except  \'{I}ntegro~\cite{integro}. 
\'{I}ntegro assumes  the number of victims (a victim is a node having attack edges) is small and  victims can be accurately detected.
The number of attack edges in Tuenti was shown to be relatively small~\cite{sybilrank}. Service providers can limit the number of attack edges via approximating trust relationships between users, e.g., looking into user interactions~\cite{wilson:eurosys09}, inferring tie strengths~\cite{gilbert:chi09}, and asking users to rate their social friends~\cite{sybildefender}.  
We note that in the large Twitter dataset we used in our experiments, only 1.5\% of the total edges are attack edges.   

For connected-Sybil attacks, limited number of attack edges is equivalent to the \emph{homophily} assumption, i.e., if we randomly sample an edge $(u,v)$ from the social network,
then $u$ and $v$ have the same label with high probability. In the following, we use homophily and limited number of attack edges interchangeably.


\myparatight{Benign users are more than Sybils} We assume that  Sybils are less than  benign users in the OSN. 
An attacker often leverages only tens of thousands of compromised hosts to create and manage Sybils~\cite{SybilUnderground13}. If an attacker registers and maintains a large number of Sybils on each compromised host, the  OSN provider can easily detect these Sybils via IP-based methods. 
In other words, to evade detection by IP-based methods, each compromised host can only maintain a limited number of Sybils. Indeed, Thomas et al.~\cite{SybilUnderground13} found that a half of compromised hosts under an attacker's control maintain less than 10 Sybils. As a result, 
in OSNs with
tens or hundreds of millions of benign users, the number of Sybils is smaller than that of benign users.
For instance, it was reported that  10\% of Twitter users were Sybils~\cite{Twittersybil}. 
Our method leverages this assumption to break the symmetry between the benign region and the Sybil region.

\section{Design of SybilBlind}
\subsection{Overview}
%

Figure~\ref{overview} overviews SybilBlind.  
SybilBlind consists of three components, i.e., \emph{sampler}, \emph{detector}, and \emph{homophily-entropy aggregator (HEA)}.  Sampler  samples two subsets of nodes  from the social network, and constructs a training set by assigning a label of benign to nodes in one subset and a label of Sybil to nodes in the other subset. 
 The detector takes the sampled noisy training set as an input and produces a probability of being Sybil for each node. The detector can be any structure-based Sybil detection method (e.g., SybilSCAR~\cite{sybilscar} in our experiments) that is relatively robust to label noise in the training set. SybilBlind repeats this sampling process for multiple trials, and it leverages  a homophily-entropy aggregator to identify the sampling trials in which the detector accurately detects Sybils. Finally, SybilBlind computes an \emph{aggregated probability} of being 
 Sybil for every node using the identified sampling trials.  
 
\begin{figure}[!tbh]
  \centering
  \begin{minipage}[b]{0.48\textwidth}
    \includegraphics[width=\textwidth]{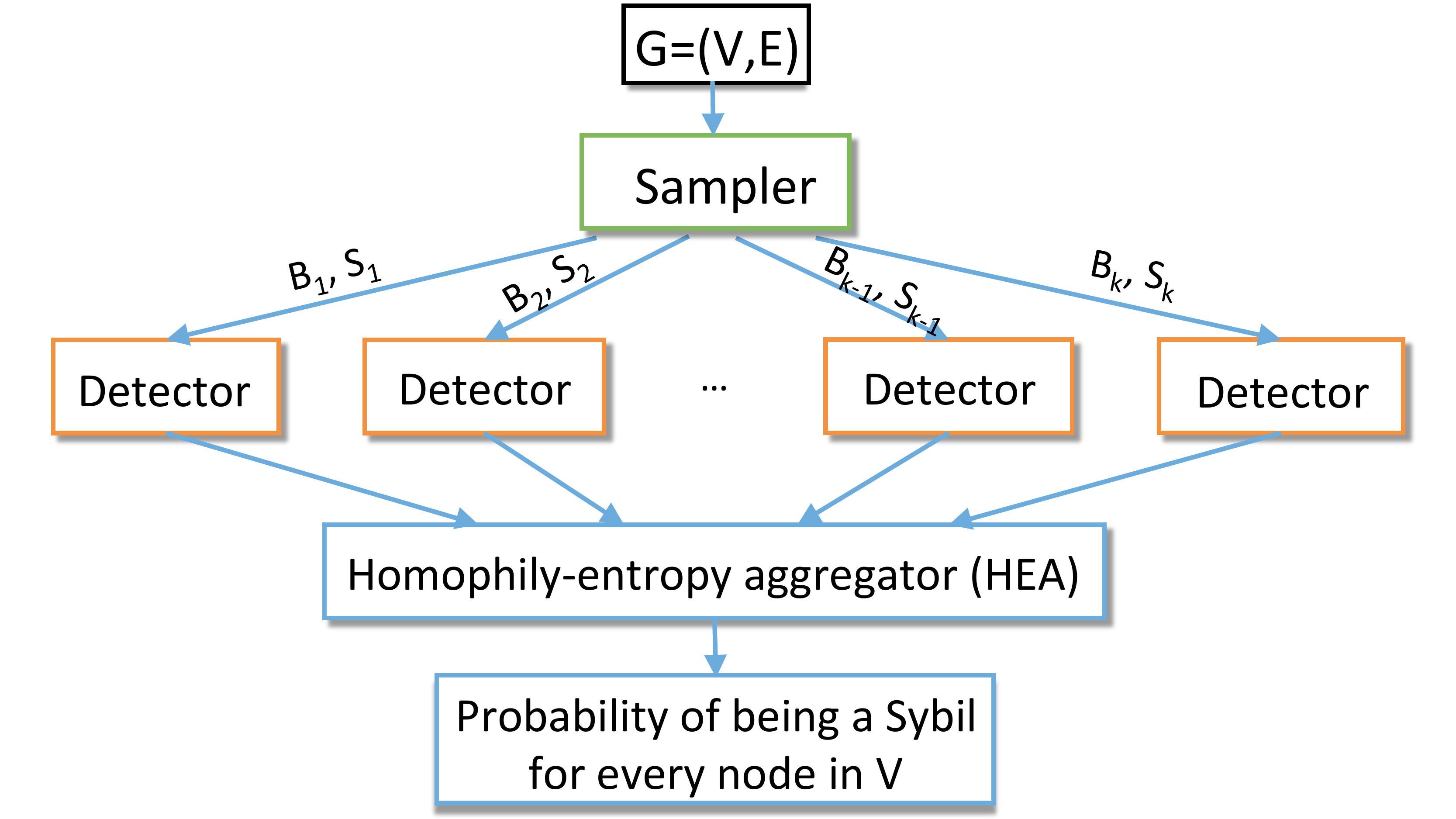}
    \caption{Overview of SybilBlind.}
    \label{overview}
  \end{minipage}
  \hfill
  \begin{minipage}[b]{0.48\textwidth}
    \includegraphics[width=\textwidth]{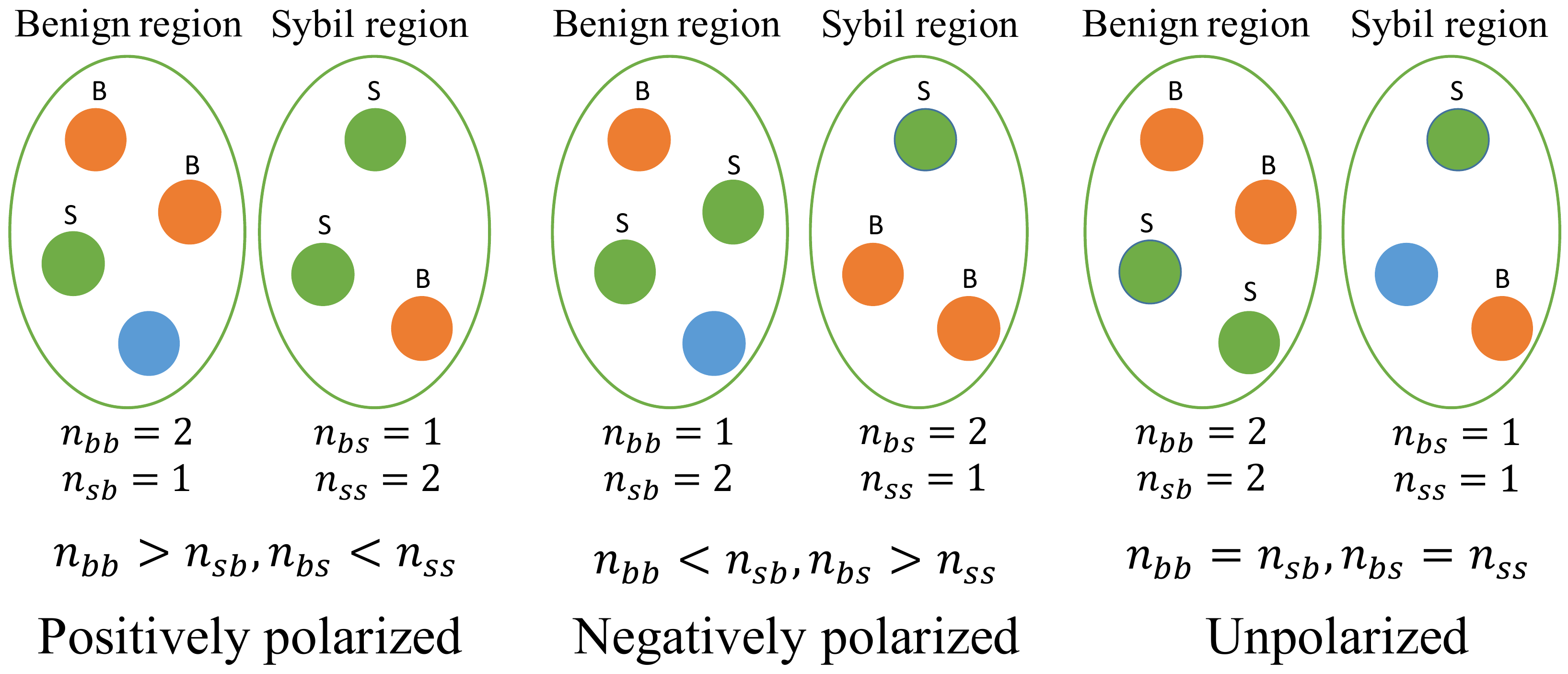}
    \caption{Three scenarios of our sampled nodes with a sampling size 3.}
	\label{sampler}
  \end{minipage}
\end{figure}

\subsection{Sampler}
In each sampling trial, our sampler samples two subsets of nodes from the set of nodes $V$, which are denoted as $B$ and $S$, respectively.  
Moreover, for simplicity, we consider the two subsets have the same number of nodes, i.e., $n=|B|=|S|$, and we call $n$ the \emph{sampling size}. 
\Alan{We note that it would be a valuable future work to apply our SybilBlind framework to subsets $B$ and $S$ with different sizes.} 

The subset $B$ (or $S$) might consist of both benign nodes and Sybils. 
For convenience, we denote by $n_{bb}$ and $n_{bs}$ respectively the number of benign nodes and the number of Sybils in $B$; and we denote by $n_{sb}$ and $n_{ss}$ respectively the number of benign nodes and the number of Sybils in $S$. We categorize the sampled nodes into three scenarios because they have different impacts on the performance of the detector. Figure~\ref{sampler} shows one example of the three scenarios, where $n=3$. 
 The three scenarios are as follows:


\begin{packeditemize}
\item {\bf Positively polarized:} In  this scenario,  
the number of benign nodes in $B$ is larger than the number of benign nodes in $S$, while the number of Sybils in $B$ is smaller than the number of Sybils in $S$. Formally, we have $n_{bb} > n_{sb}$ and $n_{bs} < n_{ss}$.  
\item {\bf Negatively polarized:} In  this scenario, 
$B$ includes a smaller  number of benign nodes than $S$, while $B$ includes a larger number of Sybils than $S$. Formally, we have $n_{bb} < n_{sb}$ and $n_{bs} > n_{ss}$.  
\item {\bf Unpolarized:} In  this scenario, 
the number of benign (or Sybil) nodes in $B$ equals the number of benign (or Sybil) nodes in $S$. Formally, we have $n_{bb} = n_{sb}$ and $n_{bs} = n_{ss}$.  
\end{packeditemize} 

\Alan{Note that since the two subsets $B$ and $S$ have the same number of nodes, we only have the above three scenarios.} We construct a training set using the sampled $B$ and $S$. Specifically, we assign a label of benign to nodes in $B$ and a label of Sybil to  nodes in $S$. 
Such training set could have label noise. In particular, in a sampling trial that is positively polarized, a majority of sampled nodes are assigned labels that match their true labels;
while in a sampling trial that is negatively polarized,
a majority of sampled nodes are assigned labels that do not match their true labels.

\subsection{Detector}
The detector takes a (noisy) training set as an input and produces a probability of being Sybil for every node (including the sampled nodes in the training set). The requirement for the detector is to be relatively robust to label noise in the training set. 
In this work, we adopt SybilSCAR~\cite{sybilscar} as the detector as it was shown to achieve state-of-the-art accuracy and robustness to label noise. However, we stress that 
our framework is extensible to use other structure-based Sybil detection methods as the detector. In particular, if a better structure-based Sybil detection method that uses a manually labeled training set is designed in the future, we can use it as the detector to further improve SybilBlind.

Next, we briefly review SybilSCAR. Given the sampled training set, SybilSCAR assigns a prior probability $q_u$ of being Sybil for every node $u$. 
Specifically, 
{\small
\begin{align}
{q}_u = 
\begin{cases} 
0.5 + \theta &\text{if } u \in S \\ \nonumber
0.5 - \theta &\text{if } u \in B \\ \nonumber
0.5 &\text{otherwise,} \nonumber
\end{cases} 
\end{align}
}%
where $0<\theta<0.5$ is a parameter to consider label noise.

\alan{Given the priors, SybilSCAR iteratively computes the probability $p_u$ of being Sybil for every node $u$ until convergence. Specifically, initially we have $p_u^{(0)} = q_u$. In the $t$th iteration, for each node $u$, we have:
{\small
\begin{align}
{p}_u^{(t)}  = {q}_{u} + 2 ({w} - 0.5) \sum_{v \in \Gamma(u)} ({p}_{v}^{(t-1)} - 0.5), 
\end{align}
}%
where $w\in [0,1]$ is the probability that two linked nodes have the same label and $\Gamma(u)$ is the set of neighbors of $u$. }

\subsection{Homophily-Entropy  Aggregator}
\label{sec:agg}
SybilBlind repeats  $k$ sampling trials, each of which produces a probability of being Sybil for every node. We denote the $k$ probabilities for $u$ as $ p_{1,u}, p_{2,u},\cdots, p_{k,u}$.  An aggregator is to reduce the $k$ probabilities to an \emph{aggregated} probability. 

\myparatight{Average, min, and max aggregators do not work well}  \emph{average}, \emph{min}, and \emph{max} aggregators are 
a few natural choices. Specifically, the average aggregator takes the average of the $k$ probabilities to be the aggregated one; the min aggregator is to take the minimum value of the $k$  probabilities, i.e., $p_u=\min_{i=1}^k p_{i,u}$; the max aggregator is to take the maximum value of the $k$ probabilities, i.e., $p_u=\max_{i=1}^k p_{i,u}$. However,  we demonstrated, theoretically and empirically, that these aggregators achieve performances that are the same with or even worse than random guessing. In particular, for the average aggregator, we can prove that the  expected aggregated probability is 0.5 for every node when the detector is SybilSCAR, 
which means that the expected performance of the average aggregator is the same as random guessing.
We show the proof in Appendix~\ref{averageaggregator}.


\myparatight{Our homophily-entropy aggregator (HEA)} We propose a novel aggregator based on two new metrics that we call \emph{homophily} and \emph{one-side entropy}. We observe that, when a sampling trial is a highly positively polarized scenario in which a majority of nodes in $B$ are benign and a majority of nodes in $S$ are Sybils, SybilSCAR can detect Sybils accurately. Our HEA aggregator aims to identify such sampling trials and use them to determine the aggregated probabilities. 
Next, we first formally define our homophily and one-side entropy metrics. 

Suppose in a sampling trial, SybilSCAR produces a probability of being Sybil for every node. We predict a node $u$ to be Sybil if $p_u > 0.5$, otherwise we predict  $u$ to be benign. Moreover, we denote by $s$ the fraction of nodes in the social network that are predicted to be Sybils. An edge $(u,v)$ in the social network is said to be \emph{homogeneous} if $u$ and $v$ have the same predicted label. Given these terms, we formally define homophily $h$ and one-side entropy $e$ as follows:
 \begin{align}
 &h=\frac{\#homogeneous\ edges}{\#edges\ in\ total} \nonumber \\
&e= 
\begin{cases}
0 &\text{ if } s > 0.5  \\
-s\text{log}(s) - (1-s)\text{log}(1-s) &\text{ otherwise } 
\end{cases}
\end{align}
Intuitively, homophily is the fraction of  edges that are predicted to be homogeneous. One-side entropy is small if too many or too few nodes  are predicted to be Sybils. 
In our threat model, 
we consider that the fraction of Sybils in the social network is less than 50\%. Therefore, we define one-side entropy to be 0 if more than a half of nodes are predicted to be Sybils. Note the difference between our defined one-side entropy and the conventional entropy in information theory.  

In a sampling trial that is an unpolarized scenario, we expect the homophily to be small because SybilSCAR tends to predict labels for nodes randomly. In a sampling trial that is a negatively polarized scenario, we expect the homophily to be large because a majority of  benign nodes are likely to be predicted to be Sybils and a majority of Sybils are likely to be predicted to be benign, which results in a large fraction of homogeneous edges. However, we expect the one-side entropy to be small because more than a half of nodes would be predicted to be Sybils. In a sampling trial that is a positively polarized scenario, we expect both homophily and one-side entropy to be large. 

Therefore, our  HEA aggregator aims to identify the sampling trials that have large homophily and one-side entropy. 
In particular, we first identify the top-$\kappa$ sampling trials among the $k$ sampling trials that have the largest homophily. Then, among the top-$\kappa$ sampling trials, we choose the sampling trial with the largest one-side entropy and use the probability obtained in this sampling trial as the aggregated probability. Essentially, among the top-$\kappa$ sampling trials, we identify the sampling trial with the largest $s$ that is no larger than 0.5, i.e., we aim to use the sampling trial that detects the most Sybils. Note that we can also reverse the order by first identifying the top-$\kappa$ sampling trials that have the largest one-side entropies and choose the sampling trial with the largest homophily. However, we find the performance is almost the same and we thus use the former way by default.

\section{Theoretical Analysis}
\label{sec:analysis}

\subsection{Sampling Size and Number of Sampling Trials} 
The sampler constructs a training set via assigning a label of benign to nodes in $B$ and a label of Sybil to nodes in $S$.  
We define label noise in the benign region (denoted as $\alpha_b$)  as the fraction of sampled nodes in the benign region whose assigned labels are Sybil. Similarly, we define label noise in the Sybil region (denoted as $\alpha_s$)  as the fraction of sampled nodes in the Sybil region whose assigned labels are benign. Formally, we have $\alpha_b=\frac{n_{sb}}{n_{sb} + n_{bb}}$ and $\alpha_s=\frac{n_{bs}}{n_{bs} + n_{ss}}$, where $n_{bb}$ and $n_{bs}$ respectively are the number of benign nodes and Sybils in $B$; $n_{sb}$ and $n_{ss}$ respectively are the number of benign nodes and Sybils in $S$.

We can derive an \emph{analytical form} for  the probability that label noise in both the benign region and the Sybil region are 
smaller than a threshold $\tau$ in a sampling trial.  Due to limited space, we omit the analytical form. 
However, the analytical form is too complex to illustrate the relationships between the sampling size and 
the number of sampling trials.  Therefore, we show the following theorem, which bounds the probability. 


 
\setcounter{theorem}{0}
\begin{theorem}
\label{theorem:bound}
In a sampling trial with a sampling size of $n$,  the probability that label noise in both the benign region and the Sybil region are no bigger than $\tau$ ($\tau \leq 0.5$) is bounded as
{\footnotesize
\begin{align}
\label{bound}
(1-r)^n r^n \leq \text{Pr}(\alpha_b \leq \tau, \alpha_s \leq \tau) \leq \exp \big(-\frac{2(1-2\tau)^2(1-r)^2n}{\tau^2 + (1-\tau)^2}\big),
\end{align} 
}%
where $r$ is the fraction of Sybils in the social network.
\end{theorem}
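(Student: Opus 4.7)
The plan is to prove the two bounds separately, using very different ideas. For the lower bound, I would simply observe that the event $\{\alpha_b = 0, \alpha_s = 0\}$ is contained in $\{\alpha_b \le \tau, \alpha_s \le \tau\}$. Since every node in the sampled $B$ is independently benign with probability $1-r$ and every node in $S$ is independently a Sybil with probability $r$, the probability that $B$ is entirely benign and $S$ is entirely Sybil equals $(1-r)^n r^n$. That gives the left-hand inequality immediately; no concentration machinery is needed.

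For the upper bound the idea is to throw away one of the two conditions and use a Hoeffding bound on the remaining one. Let $X = n_{bb}$ and $Y = n_{sb}$, which are independent $\text{Binomial}(n, 1-r)$ random variables, so $\alpha_b = Y/(X+Y)$. The inequality $\alpha_b \le \tau$ is equivalent to
\begin{equation*}
Z := (1-\tau)\,Y - \tau\,X \;\le\; 0.
\end{equation*}
Since $\mathbb{E}[Z] = (1-2\tau)(1-r)\,n$, the event $\{Z \le 0\}$ is a one-sided deviation of $Z$ below its mean by exactly $(1-2\tau)(1-r)\,n$. The random variable $Z$ is a sum of $2n$ independent terms: $n$ of them take values in $\{0, 1-\tau\}$ (range $1-\tau$) and the other $n$ take values in $\{-\tau, 0\}$ (range $\tau$). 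Applying the standard Hoeffding inequality with these ranges gives
\begin{equation*}
\Pr(\alpha_b \le \tau) \;\le\; \exp\!\left(-\frac{2(1-2\tau)^2(1-r)^2 n}{\tau^2 + (1-\tau)^2}\right),
\end{equation*}
and then $\Pr(\alpha_b \le \tau, \alpha_s \le \tau) \le \Pr(\alpha_b \le \tau)$ finishes the right-hand inequality.

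The only mild subtlety I anticipate is justifying the independence/Binomial structure of $n_{bb}$ and $n_{sb}$. Strictly, $B$ and $S$ are drawn from a finite pool without replacement, so the counts are hypergeometric rather than binomial. I would either (i) adopt a sampling-with-replacement model, which is the convention implicit in writing a clean $(1-r)^n r^n$ lower bound, or (ii) invoke Hoeffding's extension to sampling without replacement, which gives the same tail bound (the finite-population correction only improves the constant). Picking the algebraic reformulation $Z = (1-\tau)Y - \tau X$ so that the boundary $\alpha_b = \tau$ becomes the single linear threshold $Z = 0$ is the one conceptual step I expect to be the main obstacle; after that, both inequalities follow by a line or two of routine calculation.
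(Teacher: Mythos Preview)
Your proposal is correct and follows essentially the same route as the paper. The lower bound argument is identical, and for the upper bound both you and the paper linearize the constraint $\alpha_b\le\tau$ into $\tau\,n_{bb}-(1-\tau)\,n_{sb}\ge 0$ (your $Z\le 0$ is just the negative of the paper's sum $S$) and then apply Hoeffding with per-term ranges $\tau$ and $1-\tau$; the only cosmetic difference is that the paper bounds both $\Pr(\alpha_b\le\tau)$ and $\Pr(\alpha_s\le\tau)$ and takes the minimum using $r<0.5$, whereas you go straight to $\Pr(\alpha_b\le\tau,\alpha_s\le\tau)\le\Pr(\alpha_b\le\tau)$, which already yields the stated bound. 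Your remark on the hypergeometric-versus-binomial issue is a nice addition that the paper does not discuss.
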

\begin{proof}
See Appendix~\ref{proofoftheorembound}.
\end{proof}

\myparatight{Implications of Theorem~\ref{theorem:bound}}
Suppose in a social network, SybilSCAR is robust to label noise upto $\tau$, i.e., its performance almost does not degrade when the noise level is $\tau$, then SybilBlind requires at least one sampling trial, in which the label noise is less than or equal to $\tau$, to detect Sybils accurately. We have several qualitative implications from Theorem~\ref{theorem:bound}.  
We note that these implications also hold when using the analytical form of the probability that label noise are 
smaller than $\tau$. Here, we choose Theorem~\ref{theorem:bound} 
because of its conciseness. 

First,  when the sampling size is $n$ and SybilSCAR is robust to label noise up to $\tau$ in the social network,
the expected number of sampling trials (i.e., $k$) that SybilBlind requires is bounded as $k_{min} \leq k \leq k_{max}$, where 
$k_{min}=\text{exp}\big(\frac{2(1-2\tau)^2(1-r)^2n}{\tau^2 + (1-\tau)^2}\big)$ 
and $k_{max} = \frac{1}{(1-r)^n r^n}$.
Note that $k_{min}$ is exponential with respect to $n$, which could be very large even if $n$ is moderate. \emph{However, through empirical evaluations, we found $k$ can be largely reduced and a moderate $k$ could make SybilBlind obtain satisfying performance.}  
Second, when $\tau$ gets bigger, $k_{min}$ gets smaller, which implies that SybilBlind tends to require less sampling trials when detecting Sybils in a social network in which SybilSCAR can tolerate larger label noise.
Third, we observe a \emph{scale-free} property, i.e., 
the number of sampling trials is not related to the size (i.e., $|V|$ or $|E|$) of the social network.


\alan{
\subsection{Complexity Analysis}
\label{sec:complexity}

\myparatight{Space and time complexity} The major space cost of SybilBlind consists of storing the social network and storing the top-$\kappa$ vectors of posterior probabilities.
SybilBlind uses an adjacency list to represent the social network, with the space complexity $O(2|E|)$, and stores the top-$\kappa$ vectors of posterior probabilities of all nodes. Therefore, the space complexity of SybilBlind is $O(2|E|+\kappa |V|)$. 

In each trial and in each iteration, SybilBlind applies a local rule to every node, and the time complexity of the local rule 
to a node $u$ with $|\Gamma_u|$ friends is $O(|\Gamma_u|)$. Therefore, the time complexity of SybilBlind in one iteration is $O(|E|)$.
Since SybilBlind performs $k$ sampling trials and each trial runs $T$ iterations, it thus has a time complexity of $O(kT|E|)$.

\myparatight{Two-level parallel implementation} We can have a two-level parallel implementation of SybilBlind on a data center which is  a standard backend for  social web services. First, different sampling trials can be run on different machines. They only need to communicate once to share their vectors of posterior probabilities.  Second, each machine can parallelize SybilSCAR using multithreading. Specifically, in each iteration of SybilSCAR, each thread  applies the local rule to a subset of nodes in the social network. 
}


\section{Experiments}

\subsection{Experimental Setup}
\label{exp-setup}
\myparatight{Datasets}
We use social networks with synthesized Sybils and Twitter datasets with real Sybils for evaluations. 
Table~\ref{dataset} summarizes the datasets. 

{{\bf 1) Social networks with synthesized Sybils.}}
Following previous works
~\cite{Yu08,Danezis09,sybilrank}, we use a real-world social network as the benign region, while synthesizing the Sybil region and attack edges. 
Specifically, we take a Facebook network as the benign region; we synthesize the Sybil region using the \emph{Preferential Attachment (PA)} model~\cite{Barabasi99}, which is a widely used method to generate networks; and we add attack edges between the benign region and the Sybil region uniformly at random. 
In this graph, nodes 
are Facebook users and  two nodes are connected if they
are friends.
We synthesize the Sybil region such that 20\% of users in the social network are Sybils; the average degree in the Sybil region is the same as that in the benign region in order to avoid asymmetry between the two regions introduced by density. We set the number of attack edges as 500, and thus the average attack edge per Sybil is 0.06.

{{\bf 2) Small Twitter with real Sybils.}}
We obtained a publicly available Twitter dataset with 809 Sybils and 7,358 benign nodes from Yang et al.~\cite{Yang12-spam}.  A node is a Twitter user and an edge means two users follow each other. Sybils were labeled spammers. 9.9\% of nodes are Sybils and 53.4\% of Sybils are connected. The average degree is 16.72, and the average attack edge per Sybil is 49.46. 


\begin{table}[t]\renewcommand{\arraystretch}{1.2}
\centering
\caption{Dataset statistics.}
\centering
\begin{tabular}{|c|c|c|c|} \hline 
{\small Metric} & {\small Facebook} & {\small Small Twitter} &{\small Large Twitter}\\ \hline
{\small \#Nodes} & {\small 43,953} & {\small 8,167} & {\small 41,652,230}\\ \hline
{\small \#Edges} & {\small 182,384} & {\small 68,282} & {\small 1,202,513,046}\\ \hline
{\small Ave. degree} &  {\small 8.29} & {\small 16.72} & {\small 57.74}\\ \hline
{\small Ave. \#attack edge per Sybil} &  {\small 0.06} & {\small 49.46} & {\small 181.55}\\ \hline
\end{tabular}
\vspace{-4mm}
\label{dataset}
\end{table}

{{\bf 3) Large Twitter with real Sybils.}} 
We obtained a snapshot of a large-scale Twitter follower-followee network crawled by Kwak et al.~\cite{kwak2010twitter}. A node is a Twitter user and an edge between two nodes means that one node follows the other node. The network has 41,652,230 nodes and 1,202,513,046 edges. 
To perform evaluation, we need ground truth labels of the nodes. Since the Twitter network includes users' Twitter IDs, we wrote a crawler to visit each user's profile using Twitter's API, which tells us the status (i.e., active, suspended, or deleted)  of each user. In our ground truth, 205,355 nodes
were suspended, 5,289,966 nodes 
were deleted, and the remaining 36,156,909 nodes are active. 
We take suspended users as Sybils 
and active users as benign nodes. 
85.3\% Sybils are connected with an average degree 24.
1.5\% of the total edges are attack edges and the average number of attack edges per Sybil is 181.55. 
We acknowledge that our ground truth labels might be noisy since some active users might be Sybils, but they evaded Twitter's detection, and Twitter might have deleted some Sybils. 
\myparatight{AUC as an evaluation metric}   Similar to previous studies~\cite{sybilrank,integro,sybilbelief,sybilscar}, we use the Area Under the Receiver Operating Characteristic Curve (AUC) as an evaluation metric.
Suppose we rank nodes according to their probabilities of being Sybil in a descending order. AUC is the probability that a randomly selected Sybil ranks higher than a randomly selected benign node. Random guessing, which ranks nodes uniformly at random, achieves an AUC of 0.5.

\myparatight{Compared methods} We adapt a community detection method and SybilSCAR to detect Sybils when no manual labels are available.
 \Alan{Moreover, we compare with SybilRank~\cite{sybilrank} and SybilBelief~\cite{sybilbelief} that require manual labels.}

{{\bf 1) Community detection (Louvain Method).}} When there are no manually labeled training sets, community detection seems to be a natural choice to detect connected Sybils.\footnote{The local community detection method~\cite{Viswanath10}
requires labeled benign nodes and thus is inapplicable to detect Sybils without a manually labeled training set.}  A community detection method divides a social network into connected components (called ``communities"), where nodes in the same community are densely connected while nodes across different communities are loosely connected. Presumably, Sybils are in the same communities. 

 Since the benign region itself often consists of multiple communities~\cite{sybilrank,alvisiSybil13}, the key challenge of  community detection methods  is how to determine which communities correspond to Sybils. 
  Assigning a label of Sybil (or benign) to a community means that all nodes in the community are Sybils (or benign).
Since it is unclear how to assign labels to the communities algorithmically (though one could try various heuristics), in our experiments, 
we {assume} one could label communities such that  community detection achieves a \emph{false negative rate} that is the closest to that of SybilBlind. 
Specifically, SybilBlind predicts a node to be Sybil if its aggregated probability is larger than 0.5, and thus we can compute the false negative rate for SybilBlind. 
Then we compare  community detection with SybilBlind with respect to AUC, via ranking the communities labeled as Sybil higher than those labeled as benign.
 Our experiments give advantages to community detection since this label assignment might not be found in practice.  
Louvain method~\cite{Louvain08} is a widely used community detection method, 
which is efficient and outperforms a variety of  community detection methods~\cite{Louvain08}. 
Therefore, we choose Louvain method in our experiments.

{{\bf 2) SybilSCAR with a sampled noisy training set (SybilSCAR-Adapt).}} 
When a manually labeled training set is unavailable, we use our sampler to sample a training set and treat it as the input to SybilSCAR.
 The performance of this adapted SybilSCAR highly depends on the label noise of the training set. 

\Alan{{\bf 3) SybilRank and SybilBelief with labeled training set.} 
SybilRank~\cite{sybilrank} and SybilBelief~\cite{sybilbelief} are state-of-the-art random walk-based method and LBP-based method, respectively. SybilRank can only leverage labeled benign nodes, while SybilBelief can leverage both labeled benign nodes and labeled Sybils.  
We randomly sample a labeled training set, where the number of labeled benign nodes and Sybils equals $n$ (the sampling size of SybilBlind).}


{{\bf 4) SybilBlind.}}
 In the Facebook network with synthesized Sybils, our sampler samples the two subsets $B$ and $S$ uniformly at random from the entire social network.  
For the Twitter datasets, directly sampling two subsets $B$ and $S$ with a low label noise is challenging due to the number of benign nodes is far larger than that of Sybils. Thus, we refine our sampler by using discriminative node features. Previous studies~\cite{Yang11-sybil,Yang12-spam} found that Sybils proactively follow a large number of benign users in order to make more benign users follow them, but only a small fraction of benign users will follow back.
Therefore, we extract the \emph{follow back rate (FBR)} feature for each node in the Twitter datasets.
Then we rank all nodes according to their FBR features in an ascending order. 
Presumably, some Sybils are ranked high and some benign nodes are ranked low in the ranking list. 
Thus, we sample the subset $B$ from the bottom-$K$ nodes and sample the subset $S$ from the top-$K$ nodes. 
Consider the different sizes of the two Twitter datasets, we set $K=1,000$ and $K=500,000$ in the small and large Twitter datasets, respectively. 
This sampler is more likely to sample training sets that have lower label noise, and thus it improves SybilBlind's performance. \emph{Note that when evaluating SybilSCAR-Adapt on the Twitter datasets, we also use FBR-feature-refined sampler to sample a training set.}
As a comparison, we also evaluate the method simply using the FBR feature and denote it as FBR.
\Alan{Moreover, we evaluate SybilBlind with randomly sampled two subsets without the FBR feature, which we denote as SybilBlind-Random.} 

%


\myparatight{Parameter settings} 
For SybilBlind, according to Theorem 1, the minimal number of sampling trials $k_{min}$ to generate a training set with label noise less than or equal to $\tau$ is exponential with respect to $n$, and $k_{min}$ would be very large even with a modest $n$.  
However, through empirical evaluations, we found that the number of sampling trials can be largely decreased when using the FBR-feature-refined sampler. 
Therefore, we instead use the following heuristics to set the parameters, with which SybilBlind has already obtained satisfying performance.  
Specifically,
$n=10$, $k=100$, and $\kappa=10$ for the Facebook network with synthesized Sybils; $n=100$, $k=20$, and $\kappa=10$ for the small Twitter; and $n=100,000$, $k=20$, and $\kappa=10$ for the large Twitter. We use a smaller $k$ for Twitter datasets because FBR-feature-refined sampler is more likely to sample training sets with smaller label noise. We use a larger sampling size $n$ for the large Twitter dataset because its size is much bigger than the other two datasets. 
\Alan{We will also explore the impact of parameters and the results are shown in Figure~\ref{parameter}.}


For other compared methods, we set parameters according to their authors. 
For instance, we set $\theta=0.4$ for SybilSCAR. 
SybilRank requires early termination, and its number of iterations is suggested to be $O(\log |V|)$.
For each experiment, we repeat 10 times and compute the average AUC. 
We implement SybilBlind in C++ using multithreading, and we obtain the publicly available implementations for SybilSCAR (also in C++)\footnote{\url{http://home.engineering.iastate.edu/~neilgong/dataset.html}} and Louvain method\footnote{\url{https://sites.google.com/site/findcommunities/}}. 
We perform all our experiments on a Linux machine with 512GB memory and 32 cores. 


%

\begin{figure}[!tbp]
  \centering
  \begin{minipage}[c]{0.32\textwidth}
    \includegraphics[width=\textwidth]{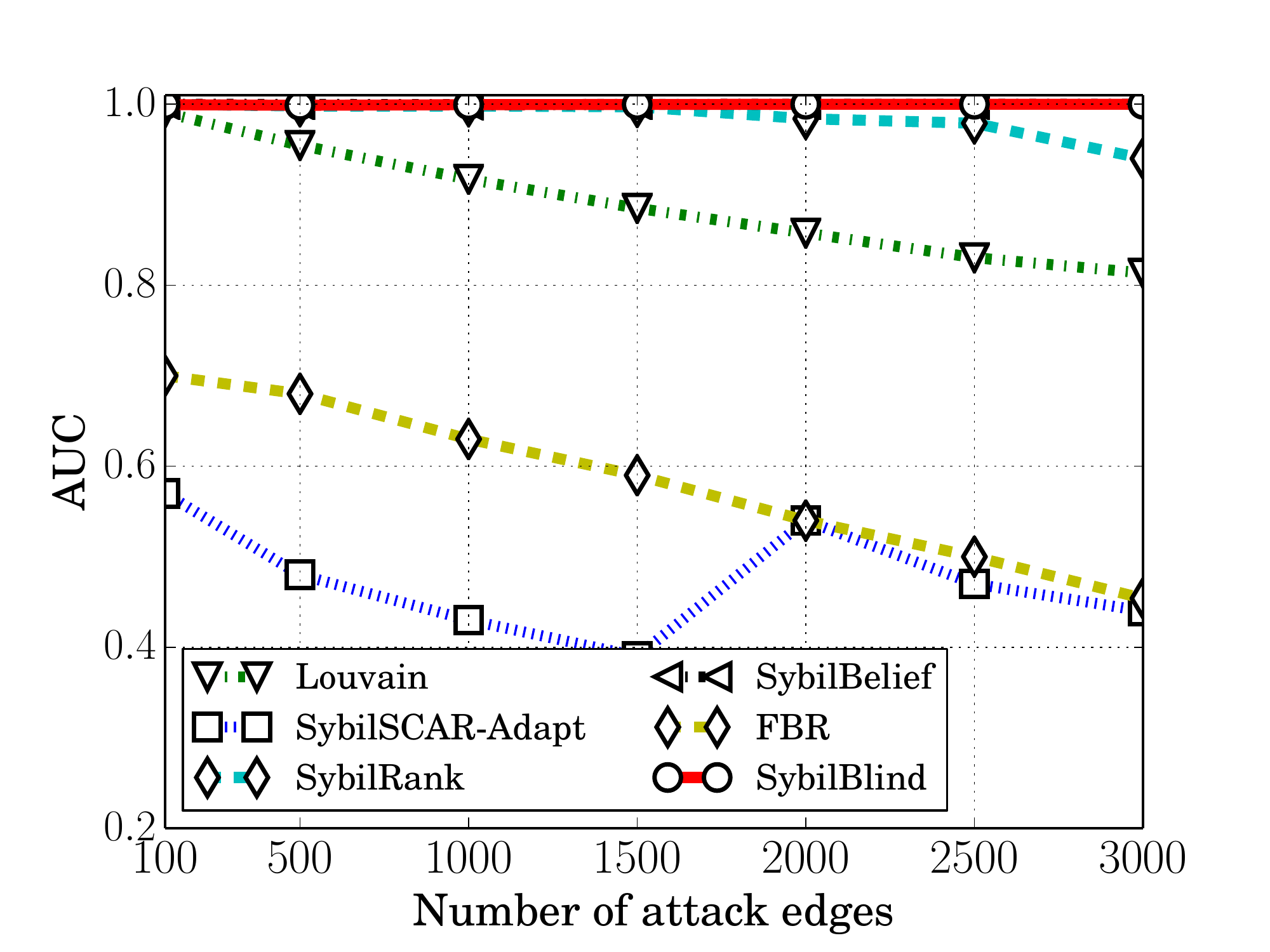}
    \caption{AUCs on the Facebook network with synthesized Sybils. SybilBlind is robust to various numbers of attack edges.}
    \label{louvain_ACU}
  \end{minipage}
  \quad
  \begin{minipage}[c]{0.64\textwidth}
  	\center
	\vspace{-4mm}
	\subfloat[Sampling size]{\includegraphics[width=0.48\textwidth]{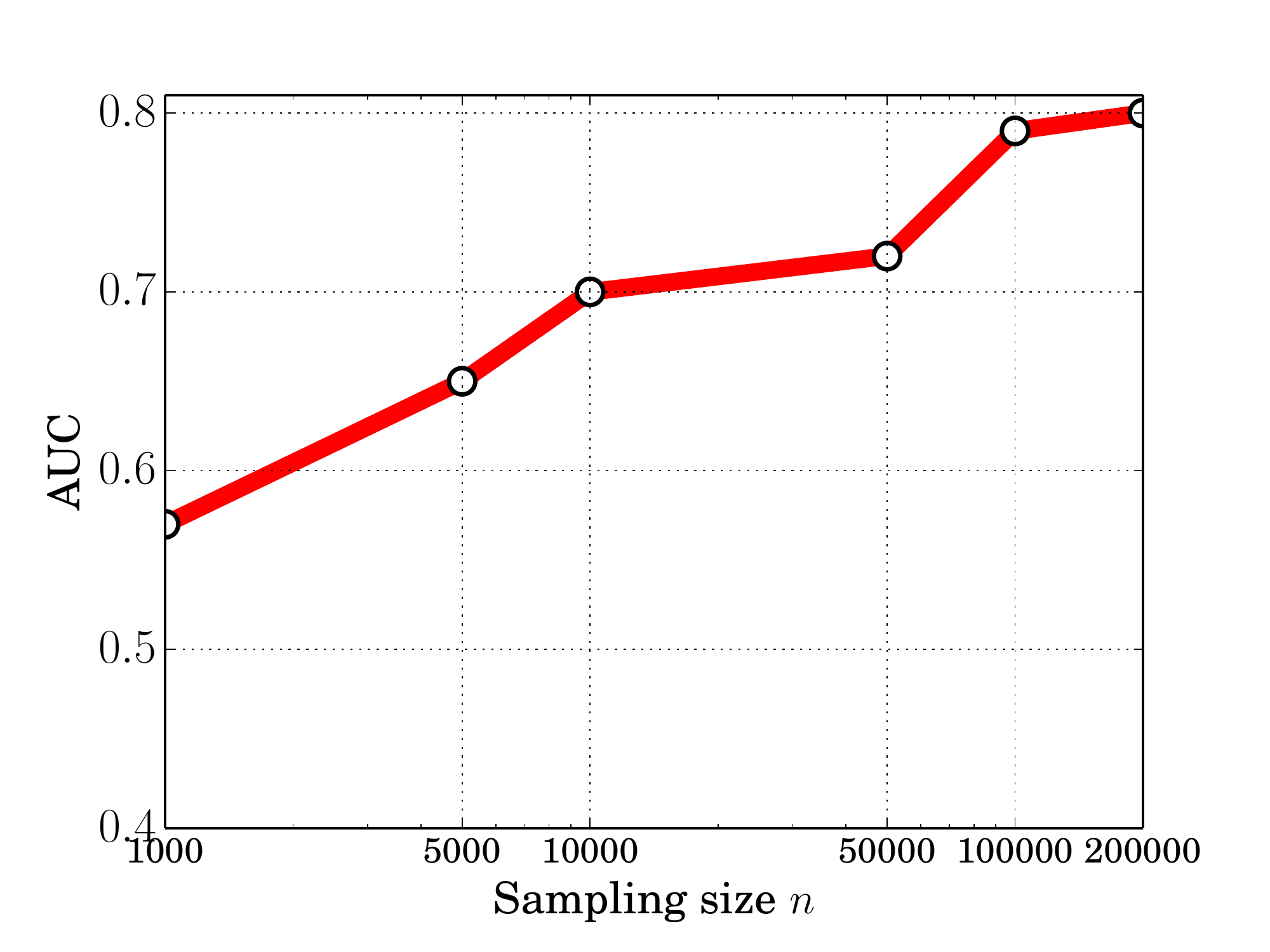}\label{effect_size}}
	\subfloat[Sampling trial]{\includegraphics[width=0.48\textwidth]{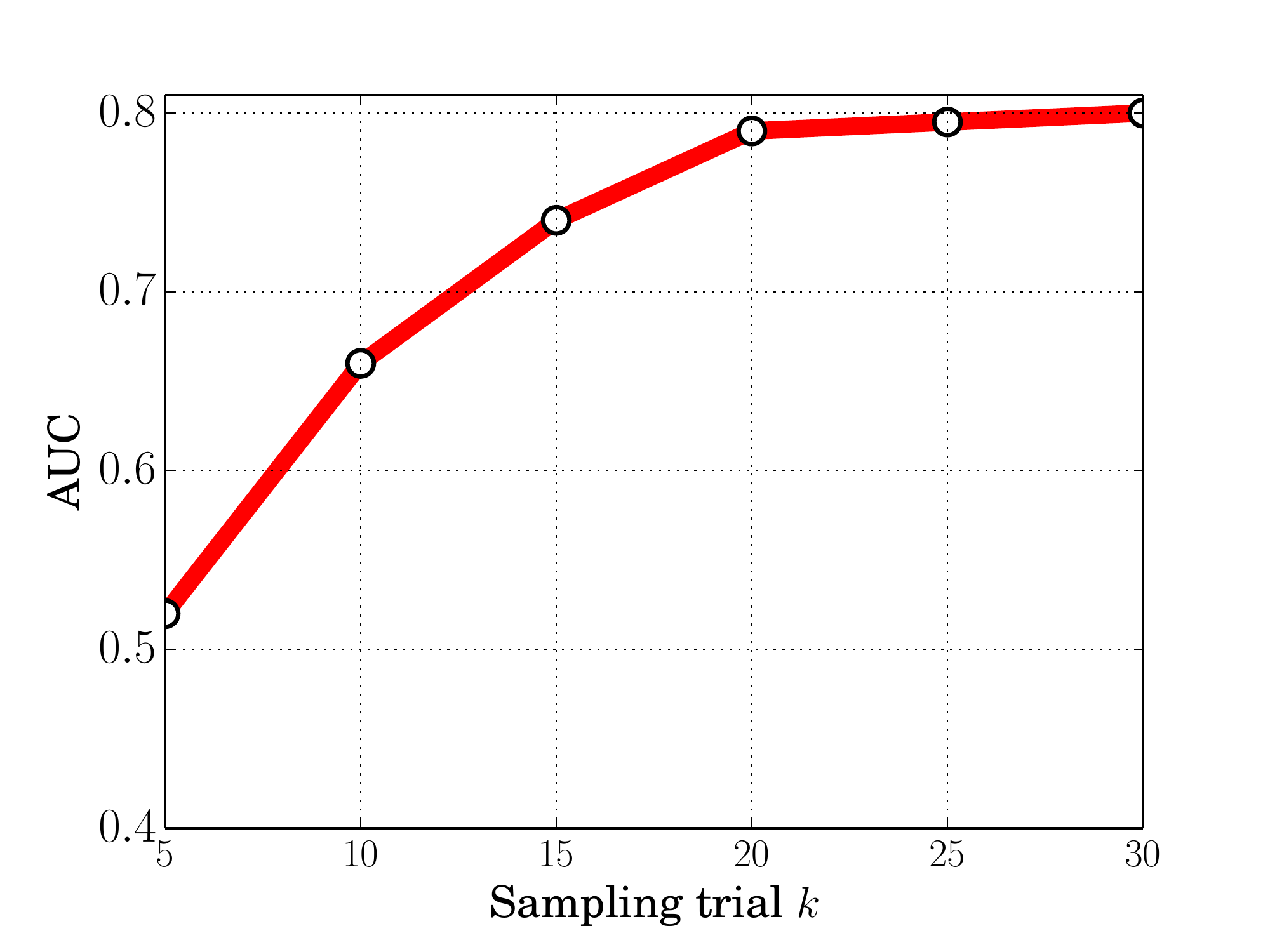} \label{effect_trial}}
	\caption{AUCs of SybilBlind vs. (a) sampling size $n$ and (b) number of sampling trials $k$ on the large Twitter. We observe that SybilBlind achieves high AUSs when $n$ and $k$ reach certain values.}
	\label{parameter}
	\end{minipage}
  \vspace{-4mm}
\end{figure}


\begin{table}[!tbp]\renewcommand{\arraystretch}{1.2}
\centering
\caption{AUCs of the compared methods on the Twitter datasets.}
\centering
\begin{tabular}{|c|c|c|} \hline 
{\small \textbf{Method}} & {\small \textbf{Small Twitter}} & {\small \textbf{Large Twitter}}  \\ \hline
{\small Louvain} & 	{\small 0.54} & {\small 0.50} \\ \hline
{\small SybilSCAR-Adapt} & {\small 0.89} & {\small 0.70} \\ \hline
{\small SybilRank} & {\small 0.86} & {\small 0.69} \\ \hline
{\small SybilBelief} & {\small 0.98} & {\small 0.78} \\ \hline
{\small FBR} & {\small 0.60} & {\small 0.51} \\ \hline
{\small SybilBlind-Random} & {\small 0.82} & {\small 0.65} \\ \hline
{\small SybilBlind} & {\small 0.98} & {\small 0.79} \\ \hline
\end{tabular}
\vspace{-4mm}
\label{AUC_Twitter}
\end{table}

\subsection{Results}
\label{expSybilBlind}

\begin{figure*}[t]
\center
\vspace{-4mm}
\subfloat[Small Twitter]{\includegraphics[width=0.42\textwidth]{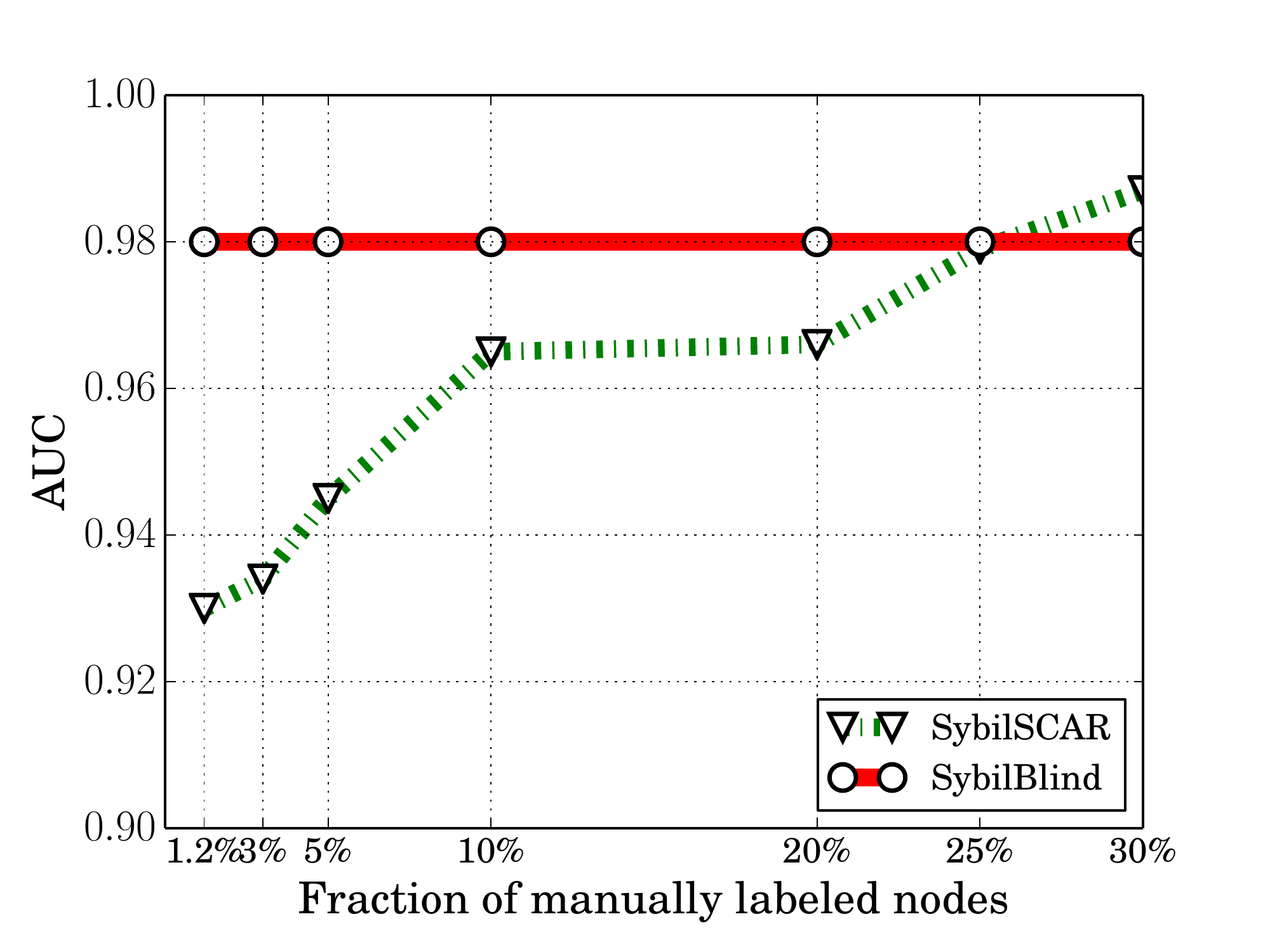}}
\subfloat[Large Twitter]{\includegraphics[width=0.42\textwidth]{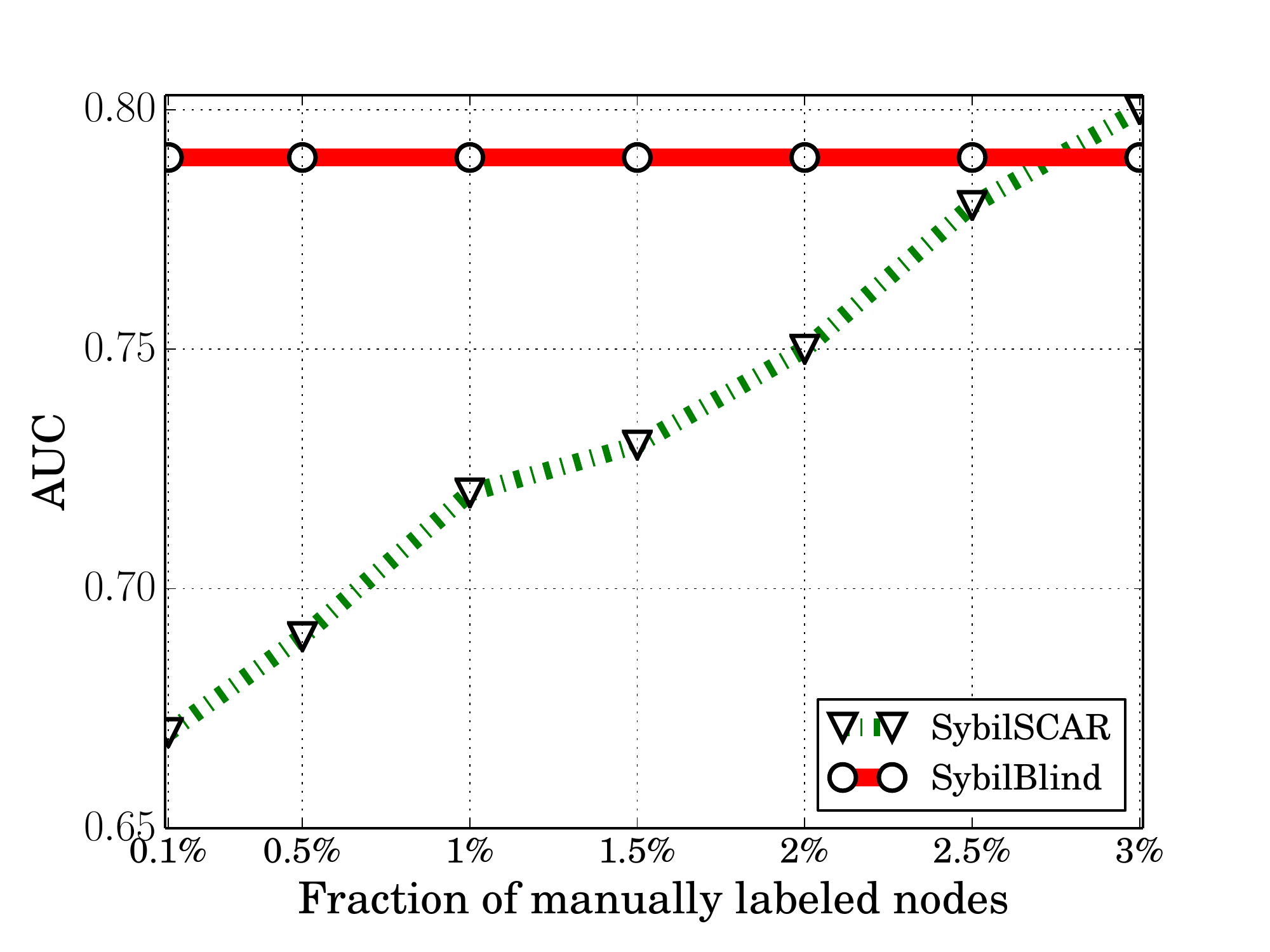}}
\caption{AUCs of SybilSCAR vs. the fraction of nodes that are manually labeled as a training set on the small Twitter and large Twitter datasets. We observe that SybilSCAR requires manually labeling about 25\% and 2.8\% of total nodes on the small Twitter and large Twitter datasets to be comparable to SybilBlind.}
\label{stability}
\vspace{-6mm}
\end{figure*}

\myparatight{AUCs of the compared methods} Figure~\ref{louvain_ACU} shows AUCs of the compared methods on the Facebook network with synthesized Sybils as we increase the number of attack edges. 
\Alan{Note that SybilBlind-Random is essentially SybilBlind in this case, as we randomly sample the subsets without the FBR feature.}
 Table~\ref{AUC_Twitter} shows AUCs of the compared methods for the Twitter datasets with real Sybils. 
We observe that 
1) SybilBlind outperforms Louvain method. Specifically, when the number of attack edges gets relatively large, even if one could design an algorithm to label communities such that Louvain method can detect as many Sybils as SybilBlind (i.e., similar false negative rates), Louvain method will rank a large fraction of benign users higher than Sybils, resulting in small AUCs.
The reason is that some communities include a large number of both benign users and Sybils, which is an intrinsic limitation of community detection.  
2) SybilBlind outperforms SybilSCAR-Adapt, which validates that our homophily-entropy aggregator is significant and essential. Thus, aggregating results in multiple sampling trials can boost the performance.  
\Alan{3) SybilBlind outperforms SybilRank and is comparable with SybilBelief, even if SybilRank and SybilBelief use a labeled training dataset. 
This is because the FBR-feature-refined sampler can sample training sets with relatively small label noise and SybilSCAR is robust to such label noise. As SybilSCAR was shown to outperform SybilRank and be comparable with SybilBelief~\cite{sybilscar}, so does SybilBlind.} 
4) SybilSCAR-Adapt achieves AUCs that are close to random guessing on the Facebook network. This is because the sampled training set has random label noise that could be large. 
SybilSCAR-Adapt works better on the Twitter datasets. Again, this is because the FBR feature assists our sampler to obtain the training sets with small label noise on the Twitter datasets and SybilSCAR can tolerate such label noise.  
5) FBR achieves a small AUC. This indicates that although the FBR feature can be used to generate a ranking list with small label noise by treating top-ranked nodes as Sybils and bottom-ranked nodes as benign, the overall ranking performance on the entire nodes is not promising. 
\Alan{6) SybilBlind-Random's performance decreases on the Twitter datasets. The reason is that it is difficult to sample training sets with small label noise, as the number of benign nodes is far larger than the number of Sybils on the Twitter datasets.}    

\myparatight{Number of manual labels SybilSCAR requires to match SybilBlind's performance} Intuitively, given a large enough manually labeled training set, SybilSCAR that takes the manually labeled training set as an input would outperform SybilBlind. Therefore, one natural question is how many nodes need to be manually labeled in order for SybilSCAR to match SybilBlind's performance. To answer this question, we respectively sample $x$ fraction of total nodes in the small Twitter dataset and large Twitter dataset and treat them as a manually labeled training set, i.e., the benign nodes are assigned a label of benign and the Sybils are assigned a label of Sybil. Note that the manually labeled training set has no label noise. Then, we run SybilSCAR with the training set, rank the remaining nodes using their probabilities of being Sybil, and compute an AUC.  
Figure~\ref{stability} shows the AUCs of SybilSCAR  as we increase $x$ from 0.1\% to 3\% on the small Twitter and large Twitter datasets. For comparison, we also show the AUC of SybilBlind on the small Twitter and large Twitter datasets, which is a straight line since it does not rely on the manually labeled training set. 
We observe that SybilSCAR requires manually labeling about 25\% of total nodes on the small Twitter and about 2.8\% of total nodes on the large Twitter in order to achieve an AUC that is comparable to SybilBlind. 

\begin{figure}[!tbp]
  \centering
  \begin{minipage}[b]{0.45\textwidth}
    \includegraphics[width=\textwidth]{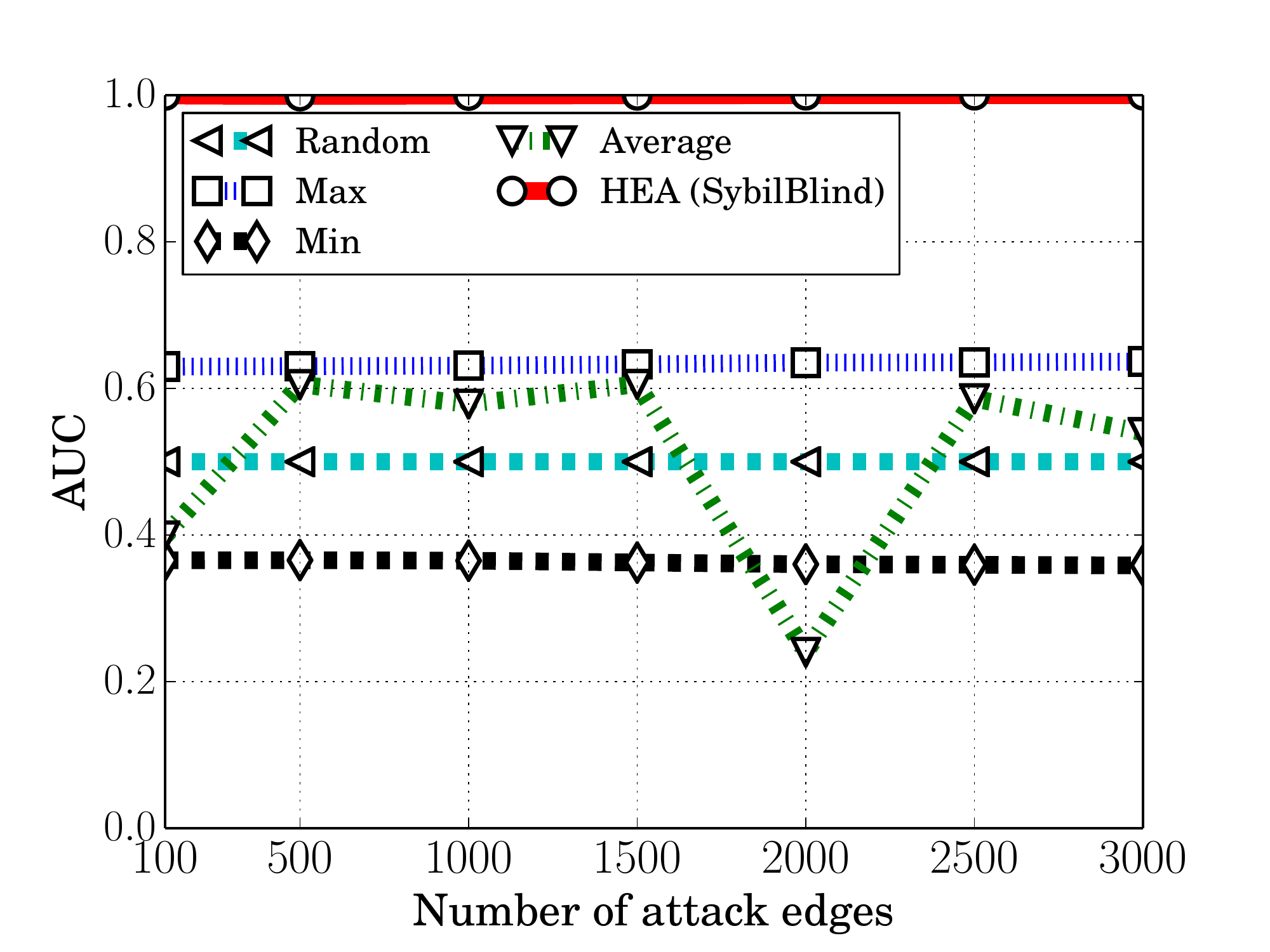}
    \caption{Performance of different aggregators on the Facebook network with synthesized Sybils. Our homophily-entropy aggregator (HEA) significantly outperforms the average, min, and max aggregators.}
    \label{aggregator_AUC}
  \end{minipage}
  \hfill
  \begin{minipage}[b]{0.45\textwidth}
    \includegraphics[width=\textwidth]{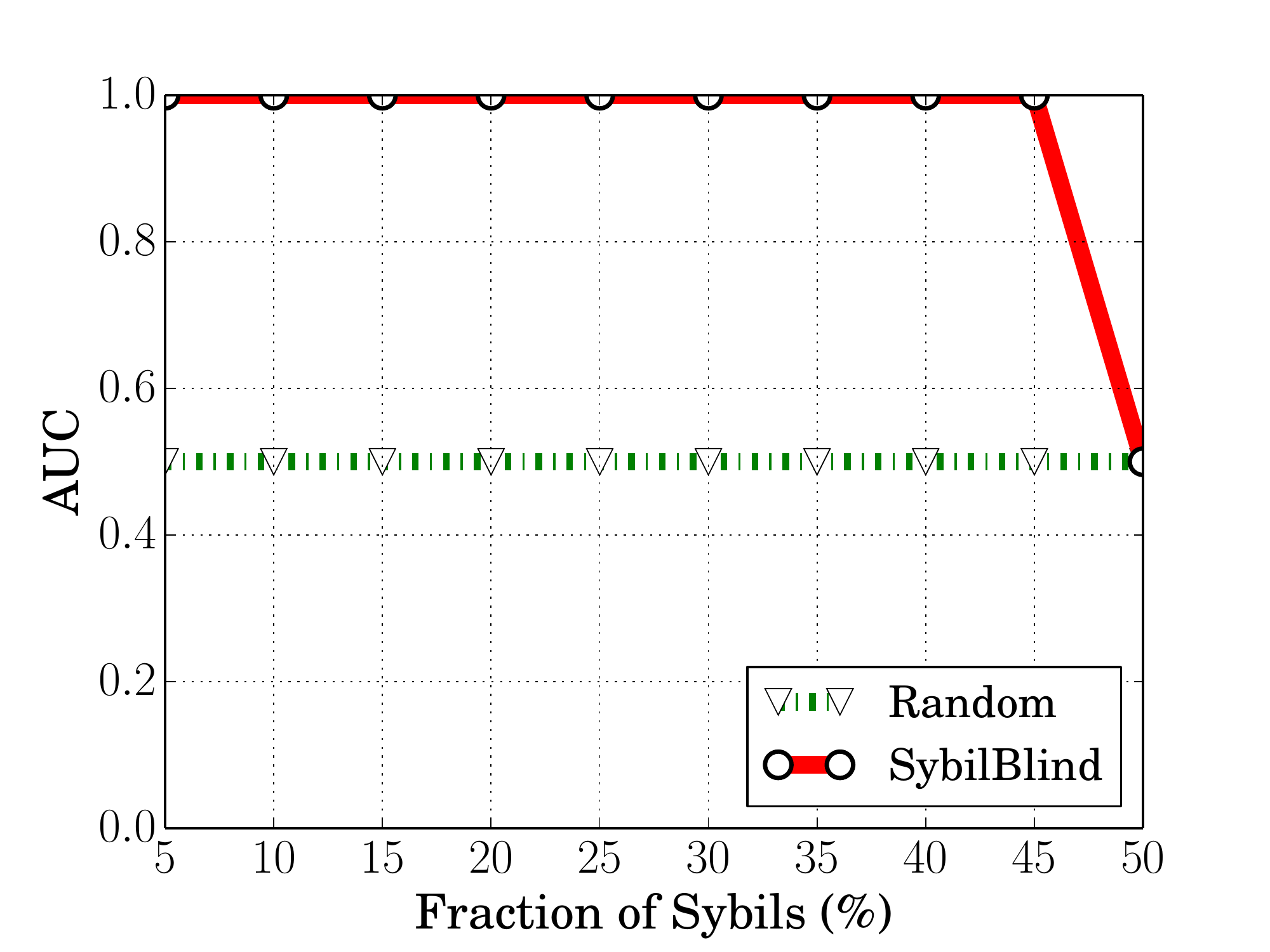}
    \caption{Impact of the fraction of Sybils on the Facebook network. We observe that SybilBlind can accurately detect Sybils once the fraction of Sybils is smaller than 50\%, i.e., Sybils are less than benign nodes.}
    \label{attackedge_AUC}
  \end{minipage}
  \vspace{-4mm}
\end{figure}

\myparatight{Comparing different aggregators} 
 Figure~\ref{aggregator_AUC} shows the performances of different aggregators on the Facebook network with synthesized Sybils as we increase the number of attack edges. We observe that our homophily-entropy aggregator (HEA) significantly outperforms the average, min, and max aggregators.  The average aggregator achieves performances that are close to random guessing. 
 This is because the average aggregator assigns an expected aggregated probability of 0.5 to every node. Moreover, the min aggregator achieves AUCs that are worse than random guessing, while the max aggregator achieves AUCs that are slightly higher than random guessing. 
 It is an interesting future work to theoretically understand the performance gaps for the min and max aggregators. 

\myparatight{Impact of the fraction of Sybils}  
Figure~\ref{attackedge_AUC} shows the AUCs of SybilBlind as the social network has more and more Sybils. We performed the experiments on the Facebook  network with synthesized Sybils since we need social networks with different number of Sybils. The number of attack edges is set to be 500. We observe that SybilBlind can accurately detect Sybils (AUCs are close to 1) once the fraction of Sybils is smaller than 50\%, i.e., Sybils are less than benign nodes. 
We note that when Sybils are more than benign nodes, SybilBlind would rank benign nodes higher than Sybils, resulting in AUCs that are close to 0. However, in practice, Sybils are less than benign nodes, as we discussed in Section~\ref{threatmodel}.

\Alan{
\myparatight{Impact of $n$ and $k$} 
Figure~\ref{effect_size} and~\ref{effect_trial} show AUCs of SybilBlind vs. sampling size $n$ ($k=20$) and the number of sampling trials $k$ ($n=100,000$) on the large Twitter, respectively. 
We observe that the AUCs increase as the sampling size and the number of sampling trials increase. The AUCs become stable after $n$ and $k$ reach certain values. The AUCs are small when $n$ or $k$ is small, because it is harder to sample  training sets with relatively small label noise.}

\Alan{
\myparatight{Running time} 
We show running time of SybilBlind on the large Twitter. 
We concurrently generate sampling trials using multiprocessing. In particular, we create 4 processes in parallel, each of which runs one sampling trial. Moreover, each sampling trial runs SybilSCAR using multithreading (20 threads in our experiments). It took about 2 hours for one process to run SybilSCAR in one sampling trial, and the total time for our SybilBlind with 20 sampling trials is around 10 hours. 
}


\section{Conclusion and Future Work}
We design a novel structure-based framework called SybilBlind to detect Sybils
in online social networks without a manually labeled training dataset. 
We demonstrate the effectiveness of SybilBlind using 
both social networks with synthetic Sybils and Twitter datasets with real Sybils. 
Our results show that Sybils can be detected without manual labels. 
Future work includes applying SybilBlind to detect Sybils 
{with sampled subsets with different sizes} 
and extending SybilBlind to learn general machine learning classifiers without manual labels.   

\Alan{
\myparatight{Acknowledgements} We thank the anonymous reviewers and our shepherd Jason Polakis for their constructive comments. This work was supported by NSF under grant CNS-1750198 and a research gift from JD.com.
}

\bibliographystyle{splncs04}
\bibliography{refs}

\appendix

\section{Performance of the Average Aggregator}
\label{averageaggregator}
\begin{theorem}
When SybilBlind uses the average aggregator, the expected aggregated probability is 0.5 for every node. 
\end{theorem}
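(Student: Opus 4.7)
The plan is to exploit the linearity of the SybilSCAR iteration around the value $0.5$ together with the symmetry between the two sampled subsets $B$ and $S$. First I would change variables: set $\tilde{q}_u = q_u - 0.5$ and $\tilde{p}_u^{(t)} = p_u^{(t)} - 0.5$. Subtracting $0.5$ from both sides of the SybilSCAR update rule yields the clean recursion
\begin{equation}
\tilde{p}_u^{(t)} = \tilde{q}_u + 2(w - 0.5)\sum_{v \in \Gamma(u)} \tilde{p}_v^{(t-1)},
\end{equation}
with initial condition $\tilde{p}_u^{(0)} = \tilde{q}_u$. Unrolling the recursion, $\tilde{p}_u^{(t)}$ is a fixed linear combination (depending only on the graph and on $w$, not on the sampled training set) of the prior deviations $\{\tilde{q}_v\}_{v \in V}$.

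Next I would analyze $\mathbb{E}[\tilde{q}_v]$ where the expectation is over one sampling trial. By construction, the sampler draws the two disjoint subsets $B$ and $S$ in an exchangeable way: the marginal law of $B$ equals the marginal law of $S$, hence $\Pr(v \in B) = \Pr(v \in S)$ for every node $v$. Since $\tilde{q}_v = +\theta$ when $v \in S$, $\tilde{q}_v = -\theta$ when $v \in B$, and $\tilde{q}_v = 0$ otherwise, this symmetry gives
\begin{equation}
\mathbb{E}[\tilde{q}_v] = \theta\cdot\Pr(v \in S) - \theta\cdot\Pr(v\in B) = 0.
\end{equation}
Applying linearity of expectation to the explicit linear expression for $\tilde{p}_u^{(t)}$ then yields $\mathbb{E}[\tilde{p}_u^{(t)}] = 0$, i.e.\ $\mathbb{E}[p_u^{(t)}] = 0.5$ in every iteration $t$ and hence also at convergence.

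Finally, the average aggregator outputs $\bar{p}_u = \frac{1}{k}\sum_{i=1}^{k} p_{i,u}$, where the $k$ sampling trials are i.i.d. Since each $\mathbb{E}[p_{i,u}] = 0.5$ by the argument above, linearity of expectation gives $\mathbb{E}[\bar{p}_u] = 0.5$ for every node $u$, which is the claim.

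The main obstacle I anticipate is justifying the symmetry $\Pr(v \in B) = \Pr(v \in S)$ cleanly: one must argue it holds for both the uniform sampler used on Facebook and the FBR-refined sampler used on Twitter. For the uniform case it is immediate since $(B,S)$ and $(S,B)$ are equidistributed. For the FBR-refined sampler, $B$ is drawn from the bottom-$K$ FBR pool and $S$ from the top-$K$ FBR pool, so strict symmetry fails; the theorem as stated should therefore be understood relative to the symmetric sampler, and I would flag this scope restriction explicitly in the proof. The rest of the argument is purely mechanical once linearity of the SybilSCAR update is written out.
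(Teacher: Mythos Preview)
Your argument is correct and rests on the same $B\leftrightarrow S$ symmetry as the paper's proof, but the decomposition differs. The paper does not pass to expectations at the level of individual priors; instead it pairs each sampling trial $(B,S)$ with the equally likely swapped trial $(B',S')=(S,B)$ and shows by induction on the iteration count that $q_u'=1-q_u$ and $p_u^{(t)'}=1-p_u^{(t)}$ for every node and every $t$. This gives an exact pairwise average of $0.5$, not merely an equality in expectation. Your route---centering at $0.5$, noting that the update is linear in $\tilde{q}$, and using $\mathbb{E}[\tilde{q}_v]=0$---is arguably more streamlined, and it makes transparent that the conclusion depends only on the marginal symmetry $\Pr(v\in B)=\Pr(v\in S)$, which is a weaker hypothesis than full exchangeability of the pair $(B,S)$. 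One small advantage of the paper's pairing, however, is that it handles the random halting time automatically: since the swapped trial satisfies $\tilde{p}^{(t)'}=-\tilde{p}^{(t)}$, both trials converge in the same number of iterations $T$, so the pairwise average at termination is exactly $0.5$. Your step ``hence also at convergence'' implicitly assumes either a fixed iteration budget or this same pairing; it would be worth making that explicit, since $\mathbb{E}[\tilde{p}_u^{(t)}]=0$ for each fixed $t$ does not by itself give $\mathbb{E}[\tilde{p}_u^{(T)}]=0$ for a sample-dependent $T$. Your remark about the FBR-refined sampler is well taken and is a scope restriction the paper leaves implicit.
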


\begin{proof}
Suppose in some sampling trial, the sampled subsets are $B$ and $S$,  and SybilSCAR halts after $T$ iterations. 
We denote by $q_u$ the prior probability and by ${p_u}^{(t)}$ the probability in the $t$th iteration for $u$, respectively. Note that the subsets $B'=S$ and $S'=B$ are sampled by the sampler with the same probability. We denote by $q_u'$ the prior probability and by ${p_u}^{(t)'}$ the probability in the $t$th iteration for $u$, respectively, when SybilSCAR uses the subsets $B'$ and $S'$. We prove that ${q}_u'=1 -{q}_u$ and ${p}_u^{(t)'} = 1-{p}_u^{(t)}$ for every node $u$ and iteration $t$. 
First, we have:
{\small
\begin{align}
{q}_u' = 
\begin{cases} 
0.5 - \theta &= 1-{q}_u \text{\ \ if } u \in S \\ \nonumber
0.5 + \theta &= 1-{q}_u \text{\ \ if } u \in B \\ \nonumber
0.5 &= 1-{q}_u \text{\ \ otherwise,} \nonumber
\end{cases} 
\end{align}
}%
which means that ${{q}_u}'=1-{q}_u$ for every node.

We have ${p_u}^{(0)'} = {q_u}'$ and ${p_u}^{(0)} = {q_u}$. Therefore, ${p}_u^{(0)'} =1 -{p}_u^{(0)}$ holds for every node in the $0$th iteration. We can also show that  ${p}_u^{(t)'} = 1-{p}_u^{(t)}$ holds for every node in the $t$th iteration if ${p}_u^{(t-1)'} = 1-{p}_u^{(t-1)}$ holds for every node. Therefore, ${p}_u^{(t)'} = 1-{p}_u^{(t)}$ holds for every node $u$ and iteration $t$. 
%
As a result, with the sampled subsets $B'$ and $S'$, SybilSCAR also halts after $T$ iterations.  Moreover, the average probability  in the two sampling trials (i.e., the sampled subsets are $B$ and $S$, and $B'=S$ and $S'=B$) is 0.5 for every node. For each pair of sampled subsets $B$ and $S$, there is a pair of subsets $B'=S$ and $S'=B$ that  are sampled by our sampler with the same probability. Therefore, the expected aggregated probability is 0.5 for every node.
\end{proof}

\section{Proof of Theorem 1}
\label{proofoftheorembound}

\myparatight{Lower bound}
We have:
{\small
\begin{align}
\text{Pr}(\alpha_b \leq \tau, \alpha_s \leq \tau) &\geq \text{Pr}(\alpha_b=\alpha_s = 0) =(1-r)^n r^n.
\end{align}
}%

We note that this lower bound is very loose because we simply ignore the cases where $\text{Pr}(0<\alpha_b \leq \tau, 0<\alpha_s \leq \tau)$. 
However, this lower bound is sufficient to give us qualitative understanding.

\myparatight{Upper bound}
 We observe that the probability that  label noise in both the benign region and the Sybil region are no bigger than $\tau$ is bounded by the probability that label noise in the benign region or the Sybil region is no bigger than $\tau$. Formally, we have:
{\small
\begin{align}
\label{min}
\text{Pr}(\alpha_b \leq \tau, \alpha_s \leq \tau) \leq \min \{\text{Pr}(\alpha_b \leq \tau), \text{Pr}( \alpha_s \leq \tau) \} 
\end{align}
}%
Next, we will bound the probabilities $\text{Pr}(\alpha_b \leq \tau)$ and $\text{Pr}( \alpha_s \leq \tau)$ separately.  We will take $\text{Pr}(\alpha_b \leq \tau)$ as an example to show the derivations, and similar derivations can be used to bound  $\text{Pr}( \alpha_s \leq \tau)$. 

We observe the following equivalent equations:
{\small
\begin{align}
\label{equi}
\text{Pr}(\alpha_b \leq \tau)&=\text{Pr}(\frac{n_{sb}}{n_{sb} + n_{bb} } \leq \tau) =\text{Pr}(\tau n_{bb} + (\tau - 1) n_{sb} \geq 0) 
 \end{align}
}%
We define $n$ random variables $X_1, X_2, \cdots, X_n$  and $n$ random variables $Y_1, Y_2, \cdots, Y_n$ as follows:
{\small
\begin{align}
&X_i = 
\begin{cases}
\tau &\text{\ \ \ \ \ if the } i \text{th node in } $B$ \text{ is benign} \nonumber \\
0 &\text{\ \ \ \ \ otherwise}  \\
\end{cases} \\
&Y_i = 
\begin{cases}
\tau-1 &\text{ if the } i \text{th node in } $S$ \text{ is benign} \nonumber \\
0 &\text{ otherwise,}  
\end{cases}
 \end{align}
 }%
 where $i=1,2,\cdots, n$. According to our definitions, we have 
 $\text{Pr}(X_i=\tau)=1 - r$ and  $\text{Pr}(Y_i=\tau - 1)=1 - r$,  where $i=1,2,\cdots, n$.
 Moreover, we denote $S$ as the sum of these random variables, i.e., 
 $S=\sum_{i=1}^n X_i + \sum_{i=1}^n Y_i$. Then, the expected value of $S$ is
 $E(S)=-(1-2\tau)(1-r)n$.  With the variables $S$ and $E(S)$, we can further rewrite 
 Equation~\ref{equi} as follows:
 {\small
 \begin{align}
\text{Pr}(\alpha_b \leq \tau)= \text{Pr}( S -E(S) \geq -E(S)) \nonumber
 \end{align}
 }%
 According to Hoeffding's inequality~\cite{Hoeffding63}, we have
 {\footnotesize
   \begin{align}
\text{Pr}( S -E(S) \geq -E(S)) &\leq \text{exp}\Big(-\frac{2E^2(s)}{(\tau^2 + (1-\tau)^2)n}\Big) =\text{exp}\Big(-\frac{2(1-2\tau)^2(1-r)^2n}{\tau^2 + (1-\tau)^2}\Big) \nonumber
 \end{align}
 }%
Similarly, we can derive an upper bound of $Pr( \alpha_s \leq \tau)$ as follows:
{\small
\begin{align}
\text{Pr}( \alpha_s \leq \tau) &\leq \text{exp}\Big(-\frac{2(1-2\tau)^2 r^2 n}{\tau^2 + (1-\tau)^2}\Big) 
 \end{align}
}%
Since we consider $r<0.5$ in this work, we have:
{\small
\begin{align}
\label{minvalue}
 \min \{\text{Pr}(\alpha_b \leq \tau), \text{Pr}( \alpha_s \leq \tau) \} = \text{exp}\Big(-\frac{2(1-2\tau)^2(1-r)^2n}{\tau^2 + (1-\tau)^2}\Big) 
\end{align}
}

By combining Equation~\ref{min} and Equation~\ref{minvalue}, we obtain Equation~\ref{bound}.

\end{document}